\newtheorem{proposition}{Proposition}
\newcolumntype{M}[1]{>{\centering\arraybackslash}m{#1}}
\newcolumntype{N}{@{}m{0pt}@{}}
\begin{document}

\preprint{APS/123-QED}

\title{Time-of-arrival distributions for continuous quantum systems and application to quantum backflow}

\author{Mathieu Beau}
\affiliation{University of Massachusetts, Boston, MA, USA}

\author{Maximilien Barbier}
\affiliation{Scottish Universities Physics Alliance, University of the West of Scotland, Paisley PA1 2BE, Scotland, United Kingdom}

\author{Rafael Martellini}
\affiliation{Centre International de Valbonne, Sophia-Antipolis, France}

\author{Lionel Martellini}
\affiliation{EDHEC Business School, Nice, France}

\date{\today}

\begin{abstract}
Using standard results from statistics, we show that for any continuous quantum system (Gaussian or otherwise) and any observable $\widehat{A}$ (position or otherwise), the distribution $ \pi _{a}\left(t\right) $ of time measurement at a fixed state $a$ can be inferred from the distribution $ \rho _{t}\left( a\right) $ of a state measurement at a fixed time $t$ via the transformation $ \pi _{a}\left( t\right) \propto \left\vert \frac{\partial }{\partial t} \int_{-\infty }^{a}\rho _{t}\left( u\right) du \right\vert $. This finding suggests that the answer to the long-lasting time-of-arrival problem is in fact secretly hidden within the Born rule, and therefore does not require the introduction of a time operator or a commitment to a specific (e.g., Bohmian) ontology. The generality and versatility of the result are illustrated by applications to the time-of-arrival at a given location for a free particle in a superposed state and to the time required to reach a given velocity for a free-falling quantum particle. Our approach also offers a potentially promising new avenue toward the design of an experimental protocol for the yet-to-be-performed observation of the phenomenon of quantum backflow.
\end{abstract}

\maketitle

\pagebreak

\section{Introduction}\label{Section:Intro} 

Two main types of spatiotemporal measurements can be performed on a physical system. The first type involves measuring the position of the system, say a particle, at a particular point in time, while the second type involves measuring the time at which the particle passes a specific point in space.  An interesting contrast exists, however, between the status of these two types of measurements in the mathematical formulation of quantum mechanics: while the Born rule gives the probability density of a position measurement at a fixed time, there is no readily available rule for obtaining the probability density of a time measurement at a fixed position. That the standard quantum formalism remains silent about the predicted results of time-of-arrival measurements while such measurements are routinely performed in physical experiments is a rather disconcerting blind spot in our quantum mechanical description of the physical world. In a nutshell, the essence of what is sometimes known as the \textit{time-of-arrival (TOA) problem} is that no self-adjoint operator canonically conjugate to the Hamiltonian can be associated with time measurement \cite{pauli1933handbuch}. As a result, time is treated as a mere parameter in the quantum formalism. 

The quantum TOA problem has led to a long-lasting controversy in the literature, which is broadly divided into three main strands. The first approach to the TOA problem, often referred to as \textit{semiclassical approach}, or \textit{hybrid approach}, consists of using a classical equation of motion with uncertain quantum initial position and velocity conditions consistent with the position/momentum uncertainty relationship. This approach appears to be a natural and pragmatic first step towards the analysis of quantum TOA distributions and is particularly well-suited for the analysis of experimental results since it can accommodate the presence of various physical constraints in the practical implementation of the measurement process (see for example \cite{udovic1993neutron,kurtsiefer1995time,kothe2013time,dufour2014shaping,gliserin2016high,rousselle2022analysis}). While it can lead to reasonably good
approximations in the far field regime when detectors are placed far away
from the source, one key limit of this approach, however, is that it ignores the fully quantum mechanical information embedded in the dynamical propagation of the state of the system through the Schr\"{o}dinger
equation. The second approach typically referred to as the \textit{%
operator-based approach}, consists of quantizing the classical time of arrival to identify a suitable time operator by relaxing one of the constraints put forward in Pauli's argument \cite{pauli1933handbuch} against
the existence of a bona fide quantum time operator. This is achieved by either considering a non-self-adjoint operator, or an operator not canonically conjugate to the system Hamiltonian
(see for example \cite{aharonov1961time,giannitrapani1997positive,delgado1999quantum,galapon2004shouldn}). More generally, outcomes of quantum measurements can always be described by positive-operator-valued measures (POVMs), and various arrival time POVMs have been introduced in the literature (see \cite{vona2013does, TUMULKA2022168910} for a discussion).
Besides the intrinsically ad-hoc nature of the search for a quantum time operator or POVM, this approach does not easily lead to explicit results regarding the TOA distribution. For example \cite{Rovelli96} only deals with the free particle, for which
they can only obtain a semi-classical approximation. The free fall
problem is analyzed via a dedicated TOA operator in \cite{flores2019quantum}, but the authors also restrict the analysis to a semi-classical approximation and do not obtain any analytical expression for the TOA distribution or its moments (the associated eigenvalue problem is solved numerically by coarse-graining). Finally, the third approach to the TOA problem consists of addressing the question within Bohmian mechanics, which relies on the existence of physical trajectories. The key insight obtained with this approach is a remarkably simple expression for the distribution of the TOA in terms of the absolute value of the probability current subject to a suitable normalization (see for example \cite{leavens1993arrival,mckinnon1995distributions,daumer1997quantum,leavens1998,vona2013does}). This approach, however, involves a strong departure from the standard formalism, with an underlying ontology that is not subject to a wide consensus. 
 
The present paper sheds new light on this debate by showing that the answer to the long-lasting time-of-arrival problem is secretly hidden within the Born rule. More specifically, we use standard results from statistics, namely the so-called \textit{probability integral transform theorem} and the \textit{method of transformations}, to derive, for any (possibly non-Gaussian) continuous quantum system and any observable $\widehat{A}$ (possibly different from the position operator), the distribution $ \pi _{a}\left( t\right) $ of a time measurement at a fixed state $a$ from the distribution $ \rho _{t}\left( a\right) $ of a state measurement at a fixed time $t$. Specifically, we show that these two distributions are related by the following transformation (see Proposition 2 and the discussion that follows):
\begin{equation}
\pi _{a}\left( t\right) \propto \left\vert \frac{\partial }{\partial t} \int_{-\infty }^{a}\rho _{t}\left( u\right) du \right\vert. 
\end{equation}%

This analysis, which generalizes the Gaussian framework used in \cite{Beau24, beau2024mutual} to derive the time-of-arrival distribution for free-falling particles, can be used to relate the distribution of the time-of-arrival at a given position to the absolute value of the probability current at that position (see equation \eqref{pi_def2}), thus providing both a formal justification and a generalization for the result obtained with Bohmian mechanics for the specific case of the position operator. That we can confirm through statistical arguments the Bohmian distribution of the time-of-arrival is not surprising since Bohmian mechanics has been constructed to give the same statistical predictions as standard quantum mechanics if a measurement is performed \cite{bohm1952suggested}. Importantly, our results extend the analysis of time-of-arrival distributions to any observable whose spectrum is continuous: in contrast to the Bohmian approach, which explicitly relies on the notion of trajectories in physical space and has focused on time-of-arrival at a given position, our approach is general enough to apply not only to the time-of-arrival at a given position but also the time required for the particle to reach a given velocity or momentum, for example. While it should be emphasized that our results have been obtained without the need to invoke an underlying Bohmian ontology, our approach still involves some form of departure from the standard formalism in the sense that our distribution of the TOA has been obtained without \textit{a priori} relying on the notion of an associated POVM. 

On a different note, it should be noted that we are also able to obtain an explicit representation of time-of-arrival as a random variable (see equation (\ref{repres})), which can be used to derive approximate or exact analytical expressions for its various moments. This feature has been used in \cite{Beau24, beau2024mutual} to derive analytical expressions for the mean and standard deviation of the time-of-arrival of a free-falling particle at a given position in various regimes, results that cannot be obtained with the Bohmian approach which at best gives access to numerical estimates from the distribution function. 

A better analytical understanding of time-of-arrival distributions offers new perspectives for studying various quantum phenomena. Here we apply our results to propose a new experimental avenue regarding the yet-to-be-performed experimental observation of the phenomenon of quantum backflow. Broadly speaking, the latter refers to the counter-intuitive fact that a quantum particle can move in the direction opposite to its momentum. Interestingly, this effect was first identified in the context of quantum arrival times~\cite{Allcock69III}. A distinctive signature of the occurrence of quantum backflow is the change of sign of the probability current. Since our TOA distribution, when applied to the position operator, appears to be proportional to the absolute value of the current, quantum backflow can thus be seen to occur whenever our TOA distribution vanishes. This observation provides a simple new experimental scenario for observing this phenomenon by means of e.g. time-of-flight measurements. Our proposal hence extends the range of possible experimental schemes that could allow for the first experimental observation of the peculiar effect of quantum backflow.

The rest of the paper is organized as follows. In Section~\ref{gen_sec}, we present an
analysis of time measurements for continuous quantum systems and we
provide an explicit expression for the distribution of the TOA
of a generic observable at a given state. In Section~\ref{ex_sec}, we present examples of applications of the approach: in particular, we discuss the possible implications of our main result for the yet-to-be-performed experimental observation of the elusive phenomenon of quantum backflow. Finally, we present our
conclusions and suggestions for further research in Section~\ref{concl_sec}.

\section{General analysis of time-of-arrival for general quantum systems}\label{gen_sec}

We first present a stochastic representation for a general quantum system
and associated time measurements. We then introduce a set of mathematical
results and a general framework that can be used to derive the distribution
of the time-of-arrival at a given state for a generic observable.

Consider a quantum Hermitian operator $\widehat{A}$ and its eigenbasis $%
\left\{ \left\vert a\right\rangle \right\} $, where $a\in \mathbb{R}$ are the
eigenvalues of the operator. As already indicated, we solely focus in this paper on
operators admitting a continuous spectrum.
The spectral representation of the operator reads 
\begin{equation}
\widehat{A}=\int_{\mathbb{R}}da\ g(a)|a\rangle \langle a|,
\end{equation}%
where $g(a)$ is the spectral density of the operator. For example, the
operator position on a real line is given by $\widehat{x}=\int_{\mathbb{R}%
}dx\ x|x\rangle \langle x|$. 

Let $|\psi _{t}\rangle $ represent the state of the system at time $t\geq 0$%
. The dynamical evolution of the state is given by the Schr\"{o}dinger
equation 
\begin{equation*}
\widehat{H}(t)|\psi _{t}\rangle =i\hbar \frac{d}{dt}|\psi _{t}\rangle ,
\end{equation*}%
where $\widehat{H}(t)$ is the possibly time-dependent Hamiltonian for the
system. If we choose the basis $\{|a\rangle \}$, the wave function $\psi
_{t}(a)=\langle a|\psi _{t}\rangle $ satisfies the Schr\"{o}dinger equation: 
\begin{equation*}
\widehat{H}(t)\psi _{t}(a)=i\hbar \frac{\partial }{\partial t}\psi _{t}(a).
\end{equation*}

Consider next the random variable denoted by $A_{t}$ associated to the
measurement outcomes of the observable $\widehat{A}$ at time $t$. By the
Born rule, and assuming that the wave function $\psi _{t}(a)$ is square-integrable, we know that $A_{t}$ admits a time-dependent probability density function 
$\rho _{t}\left( a\right) $ given by $\rho _{t}\left( a\right) =\left\vert
\psi _{t}\left( a\right) \right\vert ^{2}$. Note that this stochastic representation of quantum measurement outcomes merely boils down to introducing some new notation; we
simply \textit{define} $A_{t}$ as representing the uncertain outcome of a 
\textit{first measurement} of the observable $\widehat{A}$ performed at time 
$t$, after the system has been prepared in the state represented by $\psi
_{0}$ and evolved according to the Schr\"{o}dinger equation\textit{,
with no prior measurement}, to the state $\psi _{t}$.

\subsection{A stochastic representation of quantum measurements for
continuous systems}

Using a standard result from probability theory known as the \textit{%
probability integral transform theorem}, sometimes also known as \textit{%
universality of the uniform} (see for example chapter 7 of \cite{shorack2000probability}), the following proposition shows that a
continuous time-dependent random variable $A_{t}$ can always be represented
as a time-dependent function of a \textit{time-independent} random variable $%
\xi $.

\begin{proposition}
Let $A_{t}$ be the continuous random variable associated with the
measurement outcome at time $t$ of an observable $\widehat{A}$ with
continuous eigenbasis $\left\{ \left\vert a\right\rangle \right\} $, and let 
$\rho _{t}\left( a\right) \equiv \left\vert \psi _{t}\left( a\right)
\right\vert ^{2}$ and $F_{t}\left( a\right) \equiv \int_{-\infty
}^{a}\rho _{t}\left( u\right) du$ be its time-dependent
probability density function (PDF) and time-dependent cumulative
distribution function (CDF), respectively. Let us further assume that $F_{t}$ is
invertible. Then the random variable $A_{t}$ can be written as 
\begin{equation}
A_{t}=F_{t}^{-1}\left( \xi \right) ,
\end{equation}%
where $\xi $ is uniformly distributed on $\left[ 0,1\right] $ and admits the
time-independent PDF 
\begin{equation}
f_{\xi }\left( y\right) =\left\{ 
\begin{array}{c}
1\text{ , if }y\in \left[ 0,1\right]  \\ 
0\text{ , if }y\notin \left[ 0,1\right] 
\end{array}%
\right. 
\end{equation}%
and time-independent CDF 
\begin{equation}
F_{\xi }\left( y\right) =\left\{ 
\begin{array}{l}
0\text{ , if }y<0 \\ 
y\text{ , if }y\in \left[ 0,1\right]  \\ 
1\text{ , if }y>1%
\end{array}%
\right. .
\end{equation}
\end{proposition}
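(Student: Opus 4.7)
The plan is to invoke the classical probability integral transform at each fixed time $t$, and then reinterpret the resulting pointwise statement as a single coupling-based representation of the entire family $(A_t)_{t\ge 0}$ through one uniform random variable. Concretely, for any given $t$ I would introduce the auxiliary variable
\[
\xi \;:=\; F_t(A_t),
\]
show that its law is uniform on $[0,1]$ irrespective of $t$, and then invert to obtain $A_t = F_t^{-1}(\xi)$, which is exactly the claimed representation.

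The central computation is the CDF of $\xi$. Invertibility of $F_t$ — equivalently, strict monotonicity on the support of $\rho_t$ — ensures that the events $\{F_t(A_t)\le y\}$ and $\{A_t\le F_t^{-1}(y)\}$ coincide for $y\in[0,1]$, so by the definition of $F_t$ as the CDF of $A_t$ one finds
\[
F_\xi(y)\;=\;\mathbb{P}\!\left(A_t\le F_t^{-1}(y)\right)\;=\;F_t\!\left(F_t^{-1}(y)\right)\;=\;y.
\]
The boundary cases $y<0$ and $y>1$ are immediate from $F_t\in[0,1]$, yielding the three-piece CDF written in the proposition; differentiating then recovers the uniform PDF $f_\xi$. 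Since this CDF carries no $t$-dependence, one is free to fix a single uniform $\xi$ on an auxiliary probability space and \emph{define} $A_t := F_t^{-1}(\xi)$ for all $t\ge 0$, after which each marginal automatically satisfies the Born rule $\rho_t=|\psi_t|^2$.

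The hard part is not any particular calculation but articulating the precise sense in which $\xi$ is ``time independent.'' A priori $F_t(A_t)$ depends on $t$ through both factors, and what the probability integral transform genuinely delivers is equality of laws for every $t$, not an identity of random variables. The usable reading is therefore the coupling version just outlined: push the entire $t$-dependence into the deterministic quantile map $F_t^{-1}$ and represent the process as $A_t=F_t^{-1}(\xi)$. This is precisely the form required for the subsequent use of the method of transformations to convert $\rho_t(a)$ into the TOA density $\pi_a(t)$. A minor ancillary point I would flag is that the invertibility hypothesis should be understood modulo sets where $\rho_t$ vanishes, the argument being applied on the connected support of $\rho_t$ and extended trivially elsewhere.
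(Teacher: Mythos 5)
Your proposal is correct and follows essentially the same route as the paper: define $\xi = F_t(A_t)$, compute $F_\xi(y)=\Pr\left(A_t\le F_t^{-1}(y)\right)=y$ via the probability integral transform, and invert to get $A_t=F_t^{-1}(\xi)$. Your explicit discussion of the coupling interpretation --- that the transform only gives equality of laws at each fixed $t$, so one must fix a single uniform $\xi$ and \emph{define} $A_t:=F_t^{-1}(\xi)$ to obtain a genuinely time-independent $\xi$ --- makes precise a point the paper leaves implicit, but it does not change the underlying argument.
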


\begin{proof}
Let us define the random variable $\xi \equiv F_{t}\left( A_{t}\right) $,
where $F_{t}\left( \cdot \right) $ is the CDF of $A_{t}$, and let $F_{\xi
}\left( \cdot \right) $ be the CDF of $\xi $. Note that $\xi $ takes values
in $\left[ 0,1\right] $. Assuming that $F_{t}$ is invertible, we have for any $y\in %
\left[ 0,1\right] $:%
\begin{eqnarray*}
F_{\xi }\left( y\right)  &=&\Pr \left( \xi \leq y\right) =\Pr \left(
F_{t}\left( A_{t}\right) \leq y\right)  \\
&=&\Pr \left( A_{t}\leq F_{t}^{-1}\left( y\right) \right) =F_{t}\left(
F_{t}^{-1}\left( y\right) \right) =y .
\end{eqnarray*}%
The CDF of $\xi $ is therefore $F_{\xi }\left( y\right) =y$, which shows
that $\xi $ has a uniform distribution on $\left[ 0,1\right] $, or $\xi
\hookrightarrow U\left( 0,1\right) $. This is the essence of the \textit{probability integral transform theorem}. Intuitively, this result
states that when $X$ is a random variable, the percentile of $X$ is
"uniform", e.g. the number of outcomes for $X$ falling into the 0\%-25\%
percentile should be the same as the number of outcomes for $X$ falling into
the 25\%-50\%. As a result, the CDF of $X$ is uniformly distributed. 
\end{proof}
While the bijective property needed to define the inverse function $F_{t}^{-1}$
a priori requires that the CDF $F_{t}$ is strictly monotonic with respect to time, we actually do not need to assume a strictly increasing CDF as long as the CDF is continuous. Indeed, one can repeat the proof with the quantile
function, which is essentially a generalized inverse function $%
F_{t}^{-1}\left( y\right) =\inf \left\{ x\text{ such that }F_{t}\left(
x\right) \geq y\right\} $ for $y\in \left[ 0,1\right] .$ With this
definition, we do have $F_{\xi }\left( y\right) =\Pr \left( \xi \leq
y\right) =\Pr \left( F_{t}\left( A_{t}\right) \leq y\right) =\Pr \left(
A_{t}\leq \inf \left\{ x\text{ such that }F_{t}\left( x\right) \geq
y\right\} \right) =\Pr \left( A_{t}\leq F_{t}^{-1}\left( y\right) \right)
=F_{t}\left( F_{t}^{-1}\left( y\right) \right) =y$.

In what follows, we show that the simple and general representation $A_{t}=F_{t}^{-1}\left( \xi \right) $
turns out to have important applications to the analysis of the time-of-arrival problems in quantum physics.
It is important to emphasize at this point that if the representation of $A_{t}$ as a time-dependent function
of a time-independent random variable $A_{t}=F_{t}^{-1}\left( \xi \right) $
with $\xi \hookrightarrow U\left( 0,1\right) $ is general enough to hold for any continuous
random variable $A_{t}$ with an invertible CDF, it is not necessarily
unique. For example if $A_{t}$ is normally distributed, that is if $%
A_{t}\hookrightarrow N\left( \mu _{A}\left( t\right) ,\sigma _{A}\left(
t\right) \right) $, then one may alternatively use the linear representation 
$A_{t}=\mu _{A}\left( t\right) +\xi ^{\prime }\sigma _{A}\left( t\right) $
where $\xi ^{\prime }\equiv \frac{A_{t}-\mu _{A}\left( t\right) }{\sigma
_{A}\left( t\right) }$ is a standardized Gaussian distribution $\xi ^{\prime
}\hookrightarrow N\left( 0,1\right) $. For Gaussian systems, this linear representation can be more convenient than
the nonlinear representation $A_{t}=F_{t}^{-1}\left( \xi \right) $%
\textbf{\ }with $\xi \hookrightarrow U\left( 0,1\right) $. Note that both $%
\xi $ and $\xi ^{\prime }$ are time-independent random variables when $A_{t}$
is a continuous Gaussian random variable. Actually, we show below (see proof of Proposition 3) that both
representations lead to the same distribution for the TOA of the
observable $\widehat{A}$ when $A_{t}$ is Gaussian (see Proposition 3). On the other hand, when $%
A_{t}$ is non-Gaussian, $\xi \equiv F_{t}\left( A_{t}\right) $ remains a
time-independent random variable, but $\xi ^{\prime }\equiv \frac{A_{t}-\mu
_{A}\left( t\right) }{\sigma _{A}\left( t\right) }$ will exhibit
time-dependencies in its distribution and cumulative distribution functions.
The representation $A_{t}=F_{t}^{-1}\left( \xi \right) $ is therefore more
general and holds for all systems while the linear representation $A_{t}=\mu
_{A}\left( t\right) +\xi ^{\prime }\sigma _{A}\left( t\right) $ only works
for Gaussian systems.

We now discuss the application of this stochastic representation to the analysis of TOA.

\subsection{Distribution of time-of-arrival measurements for continuous
systems}

The notion of time-of-arrival we consider in this paper is extremely general. For example, it can be the time-of-arrival at a given position, but also the time required for a system to reach a given velocity or momentum, for example. Formally, let us first define the random time-of-arrival (TOA) $T_{a}$ of the
observable $\widehat{A}$ at a measured state $a$ starting from some
initial state. What should be regarded as a \textit{time-of-first-measurement} at state $a$ is thus represented by the following random variable: 
\begin{equation}
T_{a}\equiv \inf \left\{ t\text{ such that }A_{t}=a\right\} .  \label{ttdef}
\end{equation}

Just as $A_{t}$ is a random variable that represents the measurement
outcome of the observable $\widehat{A}$ at time $t$, $T_{a}$ is a random
variable that represents the time until the first measurement of the
observable $\widehat{A}$ yields the value $A_{t}=a$. This duality is reflected in the symmetry in notation between $%
T_{a}$, which represents a measured time of arrival at a given state $a$, and $A_{t}$, which represents a measured state
at a given time $t$.

To clarify without any ambiguity what
we mean by the time of a first measurement at state $a$, we outline that
this time measurement corresponds to the following stylized experimental
procedure: (i) we place a single detector which can only detect the state $%
\left\vert a\right\rangle $; (ii) we prepare the system at time $0$ in some
initial state ($a^{\prime }$) and we turn the detector on at some time $t$;
(iii) we record $1$ if the measured value equals $a$ for this particular
time $t$ and $0$ otherwise; (iv) we then repeat the steps (i)-(iii) $N$
times while keeping the exact same time $t$; (v) we finally repeat the
steps (i)-(iv) by letting $t$ vary, with a small enough temporal resolution $%
\delta t$ (hence, $t_{0}=0,t_{1}=\delta t,$\textperiodcentered
\textperiodcentered \textperiodcentered $,t_{k}=k\delta t,$%
\textperiodcentered \textperiodcentered \textperiodcentered $,t_{n}=n\delta t
$). This procedure allows us to reconstruct the whole distribution $\pi
_{a}(t)$ of the random variable $T_{a}$, which again can be regarded as a
stochastic time of arrival to the fixed eigenvalue $a$ (see in \cite{Beau24, beau2024mutual} for a more detailed description of the experimental protocol that can be used to obtain the distribution of the TOA at a given position).

The following proposition gives a simple explicit expression for the
probability density function for the TOA $T_{a}$. This result is extremely general and allows us to obtain the
TOA distribution for any continuous system, Gaussian or non-Gaussian. 

\begin{proposition}
Let $T_{a}\equiv \inf \left\{ t\text{ such that }A_{t}=a\right\} $ be the
random time until a first measurement yields the outcome $A_{t}=a$ for some
state $a$. Denoting by $\rho _{t}\left( a\right) \equiv \left\vert \psi
_{t}\left( a\right) \right\vert ^{2}$ and $F_{t}\left( a\right) \equiv
\int_{-\infty }^{a}\rho _{t}\left( u\right) du$ the PDF and CDF
of $A_{t}$, respectively, the probability
distribution function of the random variable $T_{a}$, denoted by $\pi
_{a}\left( t\right) $, is related to the CDF $F_{t}\left( a\right)$ by the following transformation:%
\begin{equation}
\pi _{a}\left( t\right) \propto \left\vert \frac{\partial }{\partial t}F_{t}\left(
a\right) \right\vert .
\label{pi_def}
\end{equation}%
\end{proposition}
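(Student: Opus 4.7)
The plan is to derive the PDF $\pi_a(t)$ by combining the stochastic representation from Proposition~1 with the standard method of transformations for random variables. The starting point is the identity $A_t = F_t^{-1}(\xi)$, where $\xi \hookrightarrow U(0,1)$ is \emph{time-independent}. This rewriting is the crucial conceptual move: it shifts the randomness out of the dynamics and into a single static draw, so that the event $\{A_t = a\}$ becomes the deterministic (in $t$, given $\xi$) level-set condition $F_t(a) = \xi$.

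From here, the first step is to re-express the definition of the time-of-arrival as
\begin{equation*}
T_a = \inf\{t \geq 0 : A_t = a\} = \inf\{t \geq 0 : F_t(a) = \xi\}.
\end{equation*}
If the map $t \mapsto F_t(a)$ is (locally) strictly monotonic, then $T_a$ is a deterministic function of $\xi$, namely $T_a = h_a(\xi)$ where $h_a$ is the inverse of $t \mapsto F_t(a)$. The second step is then to invoke the method of transformations: since $\xi$ has the uniform density $f_\xi \equiv 1$ on $[0,1]$, the PDF of $T_a = h_a(\xi)$ is
\begin{equation*}
\pi_a(t) = f_\xi\bigl(h_a^{-1}(t)\bigr)\,\left|\frac{d}{dt}h_a^{-1}(t)\right| = \left|\frac{\partial}{\partial t} F_t(a)\right|,
\end{equation*}
where the absolute value accommodates the two possible signs of the time-derivative (the CDF at a fixed $a$ may increase or decrease with $t$, as probability mass is transported across the threshold $a$). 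This yields the claimed formula.

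The main obstacle, and the reason the statement is phrased with $\propto$ rather than $=$, is that the map $t \mapsto F_t(a)$ need not be globally monotonic on $[0,\infty)$: the wave packet can cross the level $a$ several times, or $F_t(a)$ may have stationary points where the probability current vanishes. In that case a given value $\xi$ may correspond to several candidate arrival times, and by definition $T_a$ picks only the \emph{first} of them. I would handle this by partitioning the time axis into maximal monotonicity intervals of $F_t(a)$, applying the change-of-variable formula on each, and then restricting to the ``first-passage'' branch; the remaining branches contribute the additional mass that must be divided out by a normalization constant $\mathcal{N} = \int_0^\infty |\partial_t F_t(a)|\,dt$, which is precisely the proportionality constant hidden in \eqref{pi_def}. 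A final brief remark would note that formally this constant equals the total probability that the outcome $a$ is ever registered, so that $\pi_a(t)$ is properly interpreted as the conditional first-arrival density given that an arrival occurs.
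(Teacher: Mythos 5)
Your proposal is correct and follows essentially the same route as the paper's own proof: it uses the representation $A_t = F_t^{-1}(\xi)$ from Proposition~1 to convert the event $A_{T_a}=a$ into $\xi = F_{T_a}(a)$, applies the method of transformations with the uniform density $f_\xi \equiv 1$, and treats the non-monotonic case and the normalization (as the conditional density given that a detection occurs) in the same spirit as the paper's surrounding discussion. Your explicit partition into maximal monotonicity intervals is a slightly more careful articulation of the multi-branch issue than the paper's one-line remark, but it does not constitute a different argument.
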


The presence, or the absence, of a normalization factor in equation (\ref{pi_def}) can be analyzed as follows. First note that for all systems such that there is a positive probability $p$ that the particle never reaches a detector located at position $x$ (more generally never reaches the measured state $a$), the unconditional distribution of the time-of-arrival at $x$ (more generally at $a$) should not integrate to $1$ but to $1-p$. As a matter of fact, a straightforward application of Bayes' theorem suggests that the normalized version of the function $\pi _{a}\left( t\right) $ defines the \textit{conditional} distribution given that a detection actually occurs, that is the distribution for those particles that are actually measured in state $a$ as captured by $T_{a} \geq 0$:
\begin{multline*}
\Pr \left( \left. T_{a}\in \left[ t,t+dt\right] \right| T_{a}\geq 0\right) =\\ 
\dfrac{\Pr \left( \left. T_{a}\geq 0\right| T_{a}\in \left[ t,t+dt\right]
\right) \Pr \left( T_{a}\in \left[ t,t+dt\right] \right) }{\Pr \left(
T_{a}\geq 0\right) } .
\end{multline*}
Further noticing that $\Pr \left( \left. T_{a}\geq 0\right| T_{a}\in \left[ t,t+dt%
\right] \right) =1$, $\Pr \left( T_{a}\in \left[ t,t+dt\right] \right) =\pi
_{a}\left( t\right)dt $ and also that $\Pr \left( T_{a}\geq 0\right) =\int_{0}^{\infty
}\pi _{a}\left( s\right) ds$, we thus finally obtain
\begin{equation}%
\Pr \left( \left. T_{a}\in %
\left[ t,t+dt\right] \right| T_{a}\geq 0\right) =\dfrac{\pi _{a}\left(
t\right)dt }{\int_{0}^{\infty }\pi _{a}\left( s\right) ds}.
\end{equation}%

In this context, it is a matter of choice whether or not one should normalize the expression for $\pi _{a}\left( t\right)$ and it is only when the
condition $T_{a}\geq 0$ is satisfied for all values for $\xi $ that the normalized and non-normalized versions of the distribution coincide. 

\begin{proof}
While the proof is presented for simplicity of exposure under the restrictive assumption that the function $F_{t}\left( a\right)$ is strictly monotonic with respect to the time variable $t$, it should be noted that this assumption is not required for our result to hold. Indeed if $F_{t}\left( a\right) $ is not strictly monotonic with respect to $t$, one would end up with multiple solutions for the expression of $T_a$ as a function of $\xi$ but all of these functions relating $T_a$ to $\xi$ would have the same inverse function, given by $\xi =F_{T_{a}}\left( a\right)$, and this is the only ingredient that is needed in the application of the method of transformations. By the definition of $T_{a}$ in (\ref{ttdef}), we have $A_{T_{a}}=a$ almost
surely. Assuming that the function $F_{t}\left( a\right)$ is strictly monotonic, the inverse function $F_{t}^{-1}\left( a\right) $ exists
and is unique, and we obtain from the representation result $A_{t}=F_{t}^{-1}\left( \xi \right) $
in Proposition 1: 
\begin{equation*}
A_{T_{a}}=a\Longleftrightarrow F_{T_{a}}^{-1}\left( \xi \right) =a
\end{equation*}%
or 
\begin{equation*}
\xi =F_{T_{a}}\left( a\right) \equiv h_{a}^{-1}\left( T_{a}\right) ,
\end{equation*}%
for some function $h_{a}\left( \cdot \right) $ such that 
\begin{equation*}
h_{a}^{-1}\left( t\right) =F_{t}\left( a\right) .
\end{equation*}%
We thus obtain: 
\begin{equation*}
\xi =h_{a}^{-1}\left( T_{a}\right) \Longrightarrow T_{a}=h_{a}\left( \xi
\right) ,
\end{equation*}%
subject to the condition $T_{a} \geq 0$. When this condition is satisfied, we can thus use the representation for $A_{t}$, $%
A_{t}=F_{t}^{-1}\left( \xi \right) $, to obtain a representation for $T_{a}$%
, namely%
\begin{equation}
T_{a}=h_{a}\left( \xi \right) .  \label{repres}
\end{equation}%
This representation $T_{a}=h_{a}\left( \xi \right) $ is convenient because
it directly allows us to obtain the PDF $\pi _{a}\left( t\right) $ for $T_{a}
$ as a transformation of the PDF $f_{\xi }$ of $\xi $ by using a standard
result in probability sometimes called the "method of transformations" (see for
example theorem 4.1 in chapter 4.1.3 of \cite{StochTextbook}):%
\begin{equation}
\pi _{a}\left( t\right) =f_{\xi }\left( h_{a}^{-1}\left( t\right) \right)
\times \left\vert \frac{\partial }{\partial t}h_{a}^{-1}\left( t\right)
\right\vert \text{ } . \label{pit}
\end{equation}%
Using $h_{a}^{-1}\left( t\right) =F_{t}\left( a\right) $, equation \eqref{pit}
becomes%
\begin{equation*}
\pi _{a}\left( t\right) =f_{\xi }\left( F_{t}\left( a\right) \right) \times
\left\vert \frac{\partial }{\partial t}F_{t}\left( a\right) \right\vert 
\text{ }
\end{equation*}%
or simply%
\begin{equation*}
\pi _{a}\left( t\right) =\left\vert \frac{\partial }{\partial t}F_{t}\left(
a\right) \right\vert \text{ }
\end{equation*}%
since $f_{\xi }\left( y\right) =1$ if $y\in \left[ 0,1\right] $ when $\xi
\hookrightarrow U\left( 0,1\right) $. 
\end{proof}

It is important to note that the expression in equation \eqref{pi_def} can be related to the probability current for systems where this quantity is well-defined. For instance, if one considers a quantum particle moving in one-dimensional space with a Hamiltonian $\hat{H} = -\frac{\hbar^2}{2m}\frac{\partial^2}{\partial x^2}+V(x,t)$, then the expression of the current is 
\begin{equation}\label{Eq:CurrentDef}
j_{t}( x ) \equiv \frac{\hbar }{2mi}\left[ \psi ^{\ast
}(t,x)\frac{\partial }{\partial x}\psi (t,x)-\psi (t,x)\frac{\partial }{%
\partial x}\psi ^{\ast }(t,x)\right] 
\end{equation}

In this case, the continuity equation $%
\frac{\partial }{\partial t}\rho _{t}\left( x\right) +\frac{\partial }{%
\partial x}j_{t}\left( x\right) =0$ 
implies that
\begin{equation*}
\frac{\partial }{\partial t}F_{t}\left( x\right) =\int_{-\infty
}^{x}\frac{\partial }{\partial t}\rho _{t}\left( u\right) du=-j_{t}\left(x\right) , 
\end{equation*}
assuming that $j_t(x)\rightarrow 0$ when $x\rightarrow -\infty$.
Therefore, we have the following expression of the TOA distribution of the particle at a given position $x$:
\begin{equation}\label{pi_def2}
\pi _{x}\left( t\right) \propto \left\vert
j_{t}\left( x \right) \right\vert.
\end{equation}
Notice that this relation is valid for any expression of the current, as long as the continuity equation holds true and that the current vanishes when $x\rightarrow-\infty$ (if the latter condition is not valid, the expression \eqref{pi_def2} can be easily modified accordingly). For example, one can find a more general expression of the current for a spin particle in a magnetic field in~\cite{LandauQM77}. 

That we can formally relate the distribution of the time-of-arrival at a given position to the absolute value of the probability current at that position provides an independent justification for the result obtained with Bohmian mechanics without having to adhere to the Bohmian formulation of quantum mechanics. (Note that the justification for the presence of the absolute value directly follows from the method of transformations.) In fact, proposition 2 generalizes the Bohmian prediction in two important directions: first, it provides a general expression for TOA distributions even for systems where the probability current is ill-defined, and secondly it extends to any observable the Bohmian distribution that is only derived for the position operator.

Before
we turn to examples of applications in the next section, let us remark
that we can calculate as follows the mean and standard deviation of the TOA $T_{a}$ from its distribution function $\pi _{a}$ (possibly normalized) given in
Proposition 2: 
\begin{subequations}
\begin{equation}
   \left\langle T_{a}\right\rangle  =\int_{0}^{+\infty }t\pi _{a}\left( t\right) dt,  \label{mean1}  
\end{equation}
\begin{eqnarray}
    \Delta T_{a} =\sqrt{\int_{0}^{+%
\infty }t^{2}\pi _{a}\left( t\right) dt-\left( \int_{0}^{+\infty
}t\pi _{a}\left( t\right) dt\right) ^{2}}.  \label{var1}
\end{eqnarray}
\end{subequations}
Alternatively, these quantities can be obtained directly, and sometimes more conveniently (see \cite{Beau24, beau2024mutual}), by taking the
moments of the representation (\ref{repres}) when such a
representation is available analytically: \begin{subequations}
\begin{equation}
   \left\langle T_{a}\right\rangle  =\mathbb{E}\left[ h_{a}\left( \xi \right) %
\right] ,  \label{mean2}
\end{equation}
\begin{eqnarray}
    \Delta T_{a} =\sqrt{\mathbb{E}\left[ h_{a}^{2}\left( \xi \right) \right]
-\left( \mathbb{E}\left[ h_{a}\left( \xi \right) \right] \right) ^{2}} .
\label{var2}
\end{eqnarray}
\end{subequations}

\subsection{Distribution of time of arrival measurements for Gaussian systems}

We now specialize the analysis to Gaussian systems, and we show that our general approach nests the Gaussian framework used in \cite{Beau24, beau2024mutual} as a special case.

\begin{proposition}
Let us consider an observable $\widehat{A}$ with a Gaussian probability
distribution function: 
\begin{equation}
\rho _{t}\left( a\right) =\frac{1}{\sqrt{2\pi }\sigma _{A}(t)}e^{-\frac{%
(a-\mu _{A}(t))^{2}}{2\sigma _{A}(t)^{2}}},
\end{equation}%
where $\mu _{A}(t)$ is the mean value, $\sigma _{A}(t)$ is the standard deviation, and
let $T_{a}\equiv \inf \left\{ t\text{ such that }A_{t}=a\right\} $ be the
random time until a first measurement yields the outcome $A_{t}=a$ for some
state $a$. The probability distribution function of $T_{a}$ is given by: 
\begin{equation}\label{Eq:pi:Gaussian}
\pi _{a}\left( t\right) =\frac{1}{\sqrt{2\pi }}e^{-\frac{(u-\mu _{A}(t))^{2}%
}{2\sigma _{A}(t)^{2}}}\frac{\partial }{\partial t}\left( \frac{a-\mu _{A}(t)%
}{\sigma _{A}(t)}\right) .
\end{equation}
\end{proposition}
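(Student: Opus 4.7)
The plan is to apply Proposition~2 directly, using the fact that the Gaussian CDF is known in closed form in terms of the standard normal distribution. Writing $\Phi$ for the CDF and $\phi$ for the PDF of a standard normal variable, a change of variable $u\mapsto (u-\mu_A(t))/\sigma_A(t)$ in the integral defining $F_t(a)$ immediately gives $F_t(a)=\Phi\!\left(\frac{a-\mu_A(t)}{\sigma_A(t)}\right)$. This is the essential simplification that the Gaussian assumption buys us.

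Next I would differentiate with respect to $t$ using the chain rule. Because $a$ is fixed and only $\mu_A(t)$ and $\sigma_A(t)$ depend on $t$, we get
\begin{equation*}
\frac{\partial}{\partial t}F_t(a)=\phi\!\left(\frac{a-\mu_A(t)}{\sigma_A(t)}\right)\,\frac{\partial}{\partial t}\!\left(\frac{a-\mu_A(t)}{\sigma_A(t)}\right),
\end{equation*}
with $\phi\!\left(\frac{a-\mu_A(t)}{\sigma_A(t)}\right)=\frac{1}{\sqrt{2\pi}}\exp\!\left[-\frac{(a-\mu_A(t))^2}{2\sigma_A(t)^2}\right]$. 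Inserting this into the formula $\pi_a(t)\propto\left|\partial_t F_t(a)\right|$ of Proposition~2 reproduces the claimed expression \eqref{Eq:pi:Gaussian} (with the understanding that the absolute value in Proposition~2 may be dropped in regimes where the derivative $\partial_t[(a-\mu_A(t))/\sigma_A(t)]$ keeps a constant sign, which corresponds to the one-to-one monotonic case in the method of transformations).

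As a consistency check I would also carry out the alternative derivation alluded to in the text, which uses the linear representation $A_t=\mu_A(t)+\xi'\sigma_A(t)$ with $\xi'\hookrightarrow N(0,1)$. Imposing $A_{T_a}=a$ yields $\xi'=\frac{a-\mu_A(T_a)}{\sigma_A(T_a)}=h_a^{-1}(T_a)$, and applying the method of transformations with the standard normal density $f_{\xi'}(y)=\frac{1}{\sqrt{2\pi}}e^{-y^2/2}$ in place of the uniform density gives the same right-hand side. This confirms the remark made after Proposition~1 that, in the Gaussian case, the two stochastic representations (uniform $\xi$ versus standard normal $\xi'$) produce identical TOA distributions.

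I do not expect any serious obstacle here: the derivation is essentially a one-line chain rule calculation on top of Proposition~2. The only subtlety worth flagging explicitly is the disappearance of the absolute value in the final statement, which should be addressed by noting that the sign of $\partial_t[(a-\mu_A(t))/\sigma_A(t)]$ is the sign of the (possibly conditional, unnormalized) TOA density, so that the formula is to be read up to the normalization convention discussed right after Proposition~2.
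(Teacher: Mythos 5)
Your proposal is correct and follows essentially the same route as the paper's own proof: apply Proposition~2, write the Gaussian CDF in closed form (you use $\Phi$ where the paper uses the equivalent $\tfrac{1}{2}\bigl(1+\mathrm{erf}(\cdot)\bigr)$), differentiate by the chain rule, and then verify consistency via the linear representation $A_t=\mu_A(t)+\xi'\sigma_A(t)$, exactly as the paper does. Your explicit remark about the absolute value is a welcome clarification, since the paper's proof carries the absolute value through while the stated formula \eqref{Eq:pi:Gaussian} omits it.
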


\begin{proof}
From Proposition 2, we obtain that the probability distribution function of
the TOA $T_{a}\equiv \inf \left\{ t\text{ such that }%
A_{t}=a\right\} $ is given by 
\begin{equation*}
\pi _{a}\left( t\right) =\left\vert \frac{\partial }{\partial t}F_{t}\left(
a\right) \right\vert ,
\end{equation*}%
where 
\begin{align*}
F_{t}\left( a\right) &= 
\int_{-\infty }^{a}\frac{1}{\sqrt{2\pi }}e^{-%
\frac{(u-\mu _{A}(t))^{2}}{2\sigma _{A}(t)^{2}}}du \\
&=\frac{1}{2}\left( 1+\text{erf}\left( \frac{a-\mu _{A}(t)}{\sigma _{A}(t)\sqrt{2}}\right) \right) ,
\end{align*}
with $\text{erf}\left( z\right) = \frac{2}{\sqrt{\pi }}\int_{-\infty }^{z}e^{-u^2}du$. 
Noting that $\text{erf}^{\prime }\left( z\right) = \frac{2}{\sqrt{\pi}}e^{-z^2}$, we have
\begin{eqnarray*}
\pi _{a}\left( t\right)  &=&\left|\frac{\partial }{\partial t}\left[ \frac{1}{2}%
\left( 1+\text{erf}\left( \frac{a-\mu _{A}(t)}{\sigma _{A}(t)\sqrt{2}}%
\right) \right) \right]\right|  \\
&=&\left|\frac{\partial }{\partial t}\left( \frac{a-\mu _{A}(t)}{\sigma _{A}(t)%
\sqrt{2}}\right)\right| \frac{1}{2}\left( 1+\text{erf}^{\prime }\left( \frac{%
a-\mu _{A}(t)}{\sigma _{A}(t)\sqrt{2}}\right) \right)  \\
&=&\left|\frac{\partial }{\partial t}\left( \frac{a-\mu _{A}(t)}{\sigma _{A}(t)}%
\right)\right| \frac{1}{\sqrt{2\pi }}e^{-\frac{(u-\mu _{A}(t))^{2}}{2\sigma
_{A}(t)^{2}}} .
\end{eqnarray*}%
As discussed in a previous remark, the same result can actually be obtained via a
linear representation of $A_{t}$ in terms of a standardized Gaussian
variable.  To see this, let $\xi ^{\prime }\equiv \frac{A_{t}-\mu _{A}\left(
t\right) }{\sigma _{A}\left( t\right) }\hookrightarrow N\left( 0,1\right) $
and let $f_{\xi ^{\prime }}(a)=\frac{1}{\sqrt{2\pi }}e^{-\frac{(u-\mu
_{A}(t))^{2}}{2\sigma _{A}(t)^{2}}}$ be its PDF. This alternative stochastic
representation can be used to find the expression of $T_{a}$ by solving the
equation $a=A_{T_{a}}$: 
\begin{equation*}
a=\mu _{A}(T_{a})+\xi ^{\prime }\sigma _{A}(T_{a}),\ 
\end{equation*}%
which implies%
\begin{equation*}
\xi ^{\prime }=h_{a}^{-1}(T_{a})=\frac{a-\mu _{A}(T_{a})}{\sigma _{A}(T_{a})}%
,\ 
\end{equation*}%
where $h_{a}^{-1}(t)=\frac{a-\mu _{A}(t)}{\sigma _{A}(t)}$ is an invertible
function that can be solved for a given system. Assuming again that the
function $h_{a}$ is strictly monotonic, the ``method of transformation" can
be applied again to give the probability distribution $\pi _{a}\left(
t\right) $ for the transition time $T_{a}$ at the detection value $a$ as: 
\begin{equation*}
\pi _{a}\left( t\right) =f_{\xi ^{\prime }}(h_{a}^{-1}(t))\times \left\vert 
\frac{\partial }{\partial t}h_{a}^{-1}(t)\right\vert ,
\end{equation*}%
and we indeed recover%
\begin{equation*}
\pi _{a}\left( t\right) =\frac{1}{\sqrt{2\pi }}e^{-\frac{(a-\mu _{A}(t))^{2}%
}{2\sigma _{A}(t)^{2}}}\times \left|\frac{\partial }{\partial t}\left( \frac{a-\mu
_{A}(t)}{\sigma _{A}(t)}\right)\right| .
\end{equation*}
This approach has been used in \cite{Beau24} in the context of an application to the
distribution of the time-of-arrival at a given position for a free-falling quantum particle.
\end{proof}

The result in Proposition 3 is in
fact of wide application given that Gaussian systems are ubiquitous in
quantum physics. In practice, the expression for $\pi _{a}\left(
t\right) $ can be computed based on the specific form of the functions $\mu
_{A}(t)$ and $\sigma _{A}(t)$ for the system under analysis, and the mean
and standard-deviation of $T_{a}$ can be obtained from equations (\ref{mean1}%
) and (\ref{var1}) or (\ref{mean2}) and (\ref{var2}).

\section{Examples of application}\label{ex_sec}

In this section, we present three examples of applications of the approach. We first calculate the time-of-arrival at a given velocity for a Gaussian free-falling particle. Then, we discuss the possible implications of our main result regarding the yet-to-be-performed observation of the elusive phenomenon of quantum backflow. Finally, we study the time-of-arrival at a given location for a superposition of two Gaussian wave packets. 

\subsection{Time of arrival at a given velocity for the free-falling particle}\label{TOA_free_subsec}

In \cite{Beau24, beau2024mutual} the linear stochastic representation $A_{t}=\mu _{A}\left( t\right) +\xi ^{\prime }\sigma _{A}\left( t\right) $
where $\xi ^{\prime }\equiv \frac{A_{t}-\mu _{A}\left( t\right) }{\sigma
_{A}\left( t\right) }$ is a standardized Gaussian distribution $\xi ^{\prime
}\hookrightarrow N\left( 0,1\right) $ has been used to derive the
distribution of the time-of-arrival $T_{x}$ at a given position $X_{t}=x$ for a free falling quantum
particle, a system for which $\mu
_{X}(t)=x_{0}+v_{0}t+\frac{1}{2}gt^{2}$ and $\sigma _{A}(t)=\sigma \sqrt{1+%
\frac{t^{2}\hbar ^{2}}{4m^{2}\sigma ^{4}}}$, where $\sigma$ is the initial spread of the Gaussian wave packet, $m$ is the mass of the particle, and $g$ is the acceleration of gravity. In what follows, we turn to a
slightly different problem, namely the problem of estimating the time $T_{v}$
required for the free-falling particle to reach a certain velocity $v$,
using again our stochastic representation methodology for obtaining the
density distribution of $T_{v}$. To do this, we first need to switch bases and work with
the momentum operator $\widehat{p}=\int_{\mathbb{R}}dp\ p|p\rangle \langle p|$, where the position operator has the following representation $\widehat{x} = i\hbar\frac{\partial}{\partial p}$
. The Schr\"{o}dinger equation satisfied by the wave function in the
momentum basis is the following: 
\begin{equation}\label{Eq:Schro:freefall:momentum}
\left( \frac{p^{2}}{2m}-i mg\hbar\frac{\partial }{\partial p}\right) \psi(p,t)=i\hbar \frac{\partial }{\partial t}\psi(p,t),
\end{equation}
where the initial condition is given by 
\begin{equation*}
\psi(p,0)=\frac{1}{(2\pi \sigma_p^2)^{1/4}}e^{-\frac{(p-p_{0})^{2}}{4\sigma_p^2}} ,
\end{equation*}
where the mean value of the momentum operator is $\langle \widehat{p}\rangle
=p_{0}$ and where its standard deviation is $\sigma_p = \frac{\hbar }{2\sigma }$. The solution to the Schr\"{o}dinger equation is 
\begin{equation*}
\psi(p,t) = \frac{1}{(2\pi \sigma_p^2)^{1/4}}  e^{-i\frac{p^{3}}{6m^2g\hbar}} e^{i\frac{(p-mgt)^{3}}{6m^2g\hbar}} e^{-\frac{(p-p_c(t))^2}{4\sigma_p^2}} ,
\end{equation*}
where $p_{c}(t)=mgt+mv_{0}$ is the classical momentum. 
Hence, we find that the distribution of the momentum is Gaussian: 
\begin{equation*}
\rho _{t}(p)=\vert \psi(p,t)\vert^2=\frac{1}{\sqrt{2\pi \sigma _{p}^2}}e^{-\frac{(p-p_{c}(t))^2}{%
2\sigma _{p}}} .
\end{equation*}
As a result, we obtain the following stochastic representation for the momentum
measurement $P_{t}$ at time $t$:
\begin{equation}\label{Eq:freefall:momentum:rva}
P_{t}=p_{c}(t)+\xi \sigma _{p},
\end{equation}
where $\xi \hookrightarrow \mathcal{N}(0,1)$. To find the time $T_{v}$ of
arrival at a given velocity $v=p/m$, we must solve the
equation: 
$
P_{T_{v}}=mv .
$
Taking $v_{0}=0$, we find 
$
mv=mgT_{v}+\xi \sigma _{p},
$
implying that $T_{v}$ is finally given by 
\begin{equation*}
T_{v}=\frac{v-\xi \sigma _{v}}{g},
\end{equation*}
where $\sigma_{v}=\frac{\sigma_{P}}{m}=\frac{\hbar }{2m\sigma }$. Hence, we obtain
a simple linear relation between $\xi $ and $T_{v}$. It follows that $T_{v}$
is Gaussian, with a mean value and a standard deviation given by: 
\begin{subequations}
\begin{equation}\label{Eq:Tv:mean}
\ \left\langle T_{v}\right\rangle = t_{c}=\frac{v}{g} ,
\end{equation}
\begin{equation}\label{Eq:Tv:std}
\Delta T_{v} =\frac{\sigma _{v}}{g}=\frac{\hbar }{2mg\sigma } .
\end{equation}%
\end{subequations}
We thus obtain that the mean TOA for the velocity coincides with the
classical value. This is in contrast with the TOA at a given position $T_x$, where the mean is strictly greater than the classical value due to the presence of quantum corrections (see \cite{Beau24,beau2024mutual}). We also find that the standard deviation of the TOA at a given velocity is proportional to the standard
deviation of the velocity, with a constant of proportionality $1/g$. 
Note that we can confirm from equation \eqref{Eq:pi:Gaussian} that the time distribution $\pi_p(t)$ of a free-falling particle to reach the momentum $p$ indeed admits the following Gaussian density:
$$
    \pi_p(t) = mg\rho_{t}(p) = \frac{1}{\sqrt{2\pi\tau^2}}e^{-\frac{(t-t_p)^2}{2\tau^2}} ,
$$
where $t_p = \frac{p-mv_0}{mg}$ is the mean value of the distribution (consistently with equation \eqref{Eq:Tv:mean}) and $\tau=\frac{\sigma_p}{mg}=\frac{\hbar}{2mg\sigma}$ is the standard deviation of the distribution (consistently with equation \eqref{Eq:Tv:std}). \\

\subsection{TOA distribution and quantum backflow}\label{QB_subsec}

Here we apply the above description of arrival times to the peculiar phenomenon of quantum backflow. In the simplest scenario of a free nonrelativistic quantum particle moving in one dimension along the $x$ axis, quantum backflow pertains to the fact that the probability current $j_t(x)$ can be negative even if the momentum of the particle is positive with probability $1$. Historically, quantum backflow is intimately connected to the question of arrival times in quantum mechanics: indeed, this effect was introduced in 1969 as the origin of nonclassical contributions to the time-of-arrival probability distribution proposed by Allcock~\cite{Allcock69III}.

The first systematic study of quantum backflow was then performed by Bracken and Melloy in 1994~\cite{Bracken94}. Various aspects of this intriguing phenomenon have been studied within the last three decades, including free-fall~\cite{Melloy98_1}, relativistic~\cite{Melloy98_1,Melloy98_2, Ashfaque19}, many-particle~\cite{Barbier20} or two-dimensional~\cite{Strange12,Barbier23} scenarios. Quantum backflow is fundamentally rooted in the principle of superposition and hence provides yet another peculiar manifestation of interference. As such, effects akin to quantum backflow can arise for other kinds of waves. For instance, backflow has also been discussed for light, i.e. electromagnetic waves described by Maxwell's equations~\cite{Berry10}. Whereas optical backflow has recently been observed~\cite{Eliezer20,Daniel22}, an experimental observation of quantum backflow remains to be performed.

To this respect, proposals based on Bose-Einstein condensates have for instance been put forward~\cite{Palmero13, Mardonov14}. An alternative, ``experiment-friendly'' formulation of quantum backflow has also been recently proposed~\cite{Miller21}; however, the latter has been shown~\cite{Barbier21} not to be equivalent to the original formulation of quantum backflow.

In principle, a measurement of the wave function (by means of experimental techniques such as the ones discussed in~\cite{Lundeen11, Pan19, Zhang19, Sahoo20}) could incidentally allow to experimentally construct the probability current itself. Interestingly, our construction of the TOA distribution~\eqref{pi_def} can offer an alternative route: instead of focusing on the probability current, we propose to instead measure times of arrival (or times of flight), which are much more accessible in practice. The simplicity of the underlying experimental scheme hence makes our TOA distribution a promising candidate for a prospective experimental observation of the elusive effect of quantum backflow.

To illustrate how our TOA distribution~\eqref{pi_def} can be linked to quantum backflow, we consider a free quantum particle moving in one dimension along the $x$ axis with a positive momentum. Quantum backflow then occurs if the current $j_t(x)$ becomes negative at some space-time point $(x,t)$. In particular, a signature of the occurrence of quantum backflow is thus the change of sign of the current $j_t(x)$. We can now combine this with the fact that the TOA distribution~\eqref{pi_def} is valid for an arbitrary quantum state, and thus in particular also for a backflowing state. Remarkably, a signature of the occurrence of quantum backflow can then simply be taken to coincide with the vanishing of $\pi_{x}\left( t\right)$. This signature can thus be identified experimentally by means of measurements of arrival times (or times of flight). Again, one advantage of such a measurement scheme is that it would not require to measure the probability current itself. Indeed, the TOA distribution $\pi_{x}\left( t\right)$ is constructed by measuring the times at which a detector located at position $x$ detects the particle~\cite{Beau24}.

We illustrate this general idea with an explicit example. We consider a particle moving freely along the $x$ axis, and choose the state that was considered by Bracken and Melloy in~\cite{Bracken94}, for which the initial momentum wave function $\phi(p)$ at time $t=0$ is given by
\begin{equation}
    \phi(p) = \left\{\begin{array}{ll}
0 \qquad &, \quad \text{if} \quad p < 0 \\[0.4cm]
\frac{18}{\sqrt{35 \alpha^3}} \, p \left( e^{-p/\alpha} - \frac{1}{6} e^{-p/2 \alpha} \right) \qquad &, \quad \text{if} \quad p > 0
\end{array}\right. \, ,
\label{phi_p_QB}
\end{equation}
where $\alpha > 0$ is a constant with the dimension of momentum. The rationale for considering the particular state~\eqref{phi_p_QB} is twofold: (i) First, it allows to simplify our present discussion since we can readily observe numerically that the probability current at position $x=0$ only changes sign once with respect to time (see Fig.~\ref{TOA_QB_fig} below). (ii) Second, it is of historical relevance since it is the first normalized state for which quantum backflow has been explicitly shown to occur~\cite{Bracken94}. For this reason, we will refer to this particular state as the Bracken-Melloy state.

The fact that $\phi(p) = 0$ if $p<0$ expresses the fact that the momentum of the particle at time $t=0$ is, with certainty, positive. In other words, a measurement of the momentum of the particle at time $t=0$ will necessarily return a number $p$ that is positive (though the actual positive measured value of $p$ is uncertain). Because the particle is free by assumption, this property remains true at any later time $t>0$ since the position wave function $\psi(x,t)$ is then given by
\begin{equation}
    \psi(x,t)  = \frac{1}{\sqrt{2 \pi \hbar}} \int_{0}^{\infty} dp \, e^{- i p^2 t / 2 m \hbar} e^{ixp / \hbar} \phi(p) \,.
    \label{psi_x_t_Fourier}
\end{equation}
Substituting~\eqref{phi_p_QB} into~\eqref{psi_x_t_Fourier} and introducing the dimensionless variables
\begin{equation}
    x' \equiv \frac{\alpha x}{\hbar} \qquad \text{and} \qquad t' \equiv \frac{\alpha^2 t}{m \hbar} \,,
    \label{dimensionless_x_t}
\end{equation}
one can show~\cite{Bracken94, Barbier20} that the Bracken-Melloy state $\psi(x,t)$ can be written as
\begin{equation*}
    \psi(x,t) = \sqrt{\frac{\alpha}{\hbar}} \, \Psi(x',t') \,,
\end{equation*}
where $\Psi$ is a dimensionless function given by
\begin{widetext}
\begin{align}
    \Psi(x',t') = &- \frac{18}{\sqrt{70 \pi}} \left( \frac{5i}{6t'} + \sqrt{\frac{\pi}{4t'^3}} (i-1) \left\{ (x'+i) \exp \left[ \frac{i}{2 t'} (x'+i)^2 \right] \mathrm{erfc} \left[ - \frac{(1+i)(x'+i)}{\sqrt{4 t'}} \right] \right. \right. \nonumber\\[0.5cm]
    &- \left. \left. \frac{2x'+i}{12} \exp \left[ \frac{i}{8 t'} (2 x'+i)^2 \right] \mathrm{erfc} \left[ - \frac{(1+i)(2x'+i)}{\sqrt{16 t'}} \right] \right\} \right) \label{QB_state} \,,
\end{align}
\end{widetext}

\noindent where $\mathrm{erfc} \left( z \right)$ denotes the complementary error function, which is defined by
\begin{equation*}
    \mathrm{erfc} \left( z \right) = 1 - \mathrm{erf} \left( z \right) = \frac{2}{\sqrt{\pi}} \int_z^{\infty} dy \, e^{-y^2} \, .
\end{equation*}

We now consider the probability current $j(x,t)$ that corresponds to the Bracken-Melloy state~\eqref{QB_state}, which in view of~\eqref{dimensionless_x_t} can be expressed as
\begin{equation}
    j(x,t) = \frac{\alpha^2}{m \hbar} \, \mathcal{J} \left(x', t' \right)
    \label{current_rel}
\end{equation}
in terms of the dimensionless current $\mathcal{J}$ given by
\begin{equation*}
    \mathcal{J} \left(x', t' \right) = - \frac{i}{2} \left[ \Psi^* \frac{\partial \Psi}{\partial x'} - \Psi \frac{\partial \Psi^*}{\partial x'} \right] \,.
\end{equation*}
Focusing on the position $x=0$, Fig.~\ref{TOA_QB_fig} shows the (dimensionless) current $\mathcal{J} \left(0, t' \right)$ (dash-dotted orange line) as a function of the (dimensionless) time $t'$. We can readily see that $\mathcal{J} \left(0, t' \right)$ vanishes at $t' = t_0^{\prime}$, which we can numerically estimate to be $t_0^{\prime} \approx 0.021$. From ~\eqref{current_rel}, this means that $j(0,t)$ vanishes at the time $t=t_0$ that is given from ~\eqref{dimensionless_x_t} by
\begin{equation}
    t_0 = t_0^{\prime} \frac{m \hbar}{\alpha^2} \approx 0.021 \frac{m \hbar}{\alpha^2} \, .
    \label{t_0_approx}
\end{equation}

We then apply the definition~\eqref{pi_def} to construct the TOA distribution $\pi_0(t)$ at position $x=0$ that corresponds to the Bracken-Melloy state~\eqref{QB_state}. In view of~\eqref{dimensionless_x_t} and~\eqref{current_rel} we can write
\begin{equation}
    \pi_0(t) = \frac{\alpha^2}{m \hbar} \Pi_0(t')
    \label{pi_0_Pi_0_rel}
\end{equation}
in terms of the normalized dimensionless TOA distribution $\Pi_0(t')$ given by
\begin{equation*}
    \Pi_0(t') = \frac{\left\vert
\mathcal{J}\left( 0, t' \right) \right\vert}{\int_{0}^{\infty } \left\vert \mathcal{J} \left( 0,s' \right) \right\vert ds'} \, .
\end{equation*}
Fig.~\ref{TOA_QB_fig} shows the (dimensionless) TOA distribution $\Pi_0(t')$ (solid blue line) as a function of the (dimensionless) time $t'$. As expected, we can readily see that $\Pi_0(t')$ vanishes at the same time $t' = t_0^{\prime}$ as the current, which hence also means in view of~\eqref{pi_0_Pi_0_rel} that $\pi_0(t)$ vanishes at the time $t=t_0$.


\begin{center}
\begin{figure}[!ht]
\centering
\includegraphics[width=0.80\linewidth]{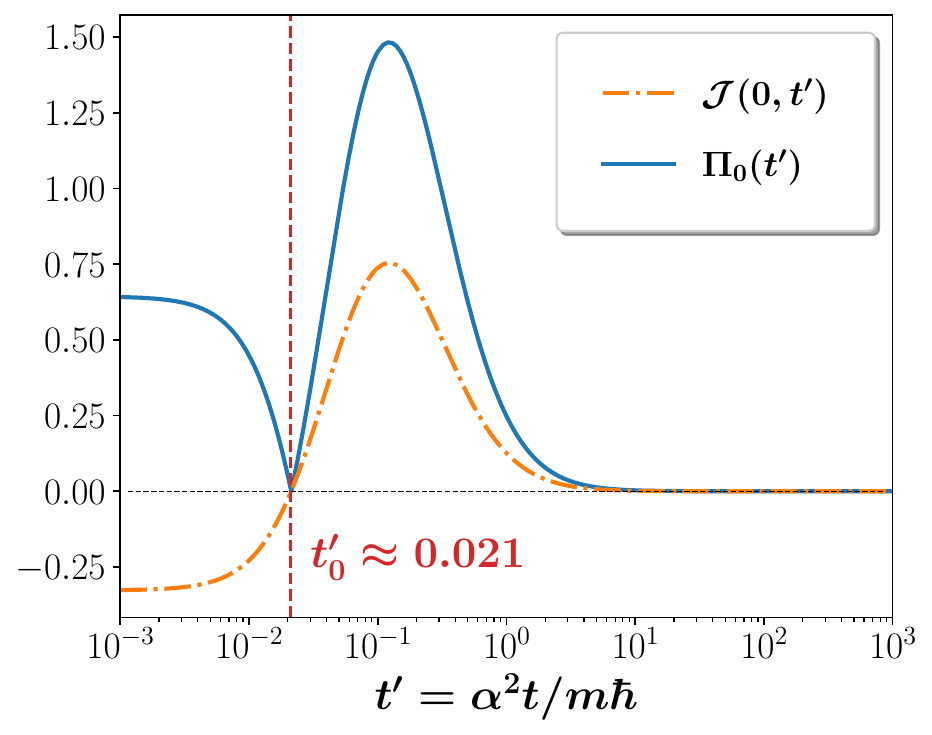}
\caption{\textbf{Current and TOA distribution for the Bracken-Melloy state}(Dimensionless) current $\mathcal{J} \left(0, t' \right)$ (dash-dotted orange line) at position $x=0$ for the Bracken-Melloy state~\eqref{QB_state}, along with the corresponding (dimensionless) TOA distribution $\Pi_0(t')$ (solid blue line) as a function of the (dimensionless) time $t'$, on a log scale.}
\label{TOA_QB_fig}
\end{figure}
\end{center}


The vanishing of $\pi_0(t)$ at time $t=t_0$ is a direct consequence of the change of sign of the current at $t=t_0$. Since this change of sign of the current signals the occurrence of quantum backflow, we can thus also take the vanishing of the TOA distribution $\pi_0(t)$ at time $t=t_0$ as a signature of quantum backflow.

Let us now imagine that we can measure experimentally the times of arrival of the quantum particle at position $x=0$ if it is in the Bracken-Melloy state~\eqref{QB_state}. We assume that our time measurements have a certain uncertainty $\delta t$. We then consider the probability $\mathcal{P}_0(\varepsilon)$ to detect the particle at position $x=0$ within the time interval $t \in [t_0 - \varepsilon, t_0 + \varepsilon]$. By construction of the TOA distribution $\pi_0(t)$, we hence have
\begin{equation*}
    \mathcal{P}_0(\varepsilon) = \int_{t_0 - \varepsilon}^{t_0 + \varepsilon} dt \, \pi_0(t) \, .
\end{equation*}
Since we know that $\pi_0(t)$ vanishes for $t=t_0$, we can thus imagine that the probability $\mathcal{P}_0(\varepsilon)$ can take on very small values even for values of $\varepsilon$ that are much larger than the uncertainty $\delta t$. This would thus provide an observable signature of the occurrence of quantum backflow, which would be accessible \textit{without} having to measure the probability current itself. Indeed, in practice only measurements of the arrival times at the position $x=0$ would be realized: more precisely, one would simply record the time at which a detector, located at the fixed position $x=0$, detects the particle, and repeat such a measurement a large number of times.

We illustrate this idea in the particular case of a $^{87}$Rb rubidium atom, of mass $m_{\text{Rb}} =  144.32 \times 10^{-27} \, \text{kg}$, moving with a speed $v=3 \, \text{mm}/\text{s}$: this allows us to express the momentum scale $\alpha$ as $\alpha = m_{\text{Rb}} v$. This value of $v$ is, in particular, consistent with typical values that can be achieved in experiments using condensates of $^{87}$Rb atoms~\cite{Fabre11, Jendrzejewski12, Dupont23}. In this case, the typical time scale $m_{\text{Rb}} \hbar / \alpha^2$ is given by
\begin{equation}
\frac{m_{\text{Rb}} \hbar}{\alpha^2} = \frac{\hbar}{m_{\text{Rb}} v^2} \approx 0.81 \times 10^{-4} \, \text{s} \, ,
\label{time_scale_Rb}
\end{equation}
and the time $t_0$ at which the TOA distribution vanishes is, in view of~\eqref{t_0_approx} and~\eqref{time_scale_Rb}, given by
\begin{equation*}
    t_0 \approx 1.71 \, \mu\text{s} \, .
\end{equation*}

We now fix a desired value of the probability $\mathcal{P}_0(\varepsilon)$, and compute the corresponding value of $\varepsilon$. We then require the time resolution $\delta t$ of our measurement to be one order of magnitude smaller than $\varepsilon$, i.e. we require $\delta t = \varepsilon / 10$. We repeat this for different values of $\mathcal{P}_0(\varepsilon)$. The results are shown in table~\ref{table2}.

We can immediately see from the values given in table~\ref{table2} that to measure relatively small probabilities $\mathcal{P}_0(\varepsilon)$ in this particular case would require a rather high precision for our time measurements: indeed, the corresponding uncertainties $\delta t$ take values from a tenth of a microsecond [for $\mathcal{P}_0(\varepsilon)=10^{-2}$] to the picosecond [for $\mathcal{P}_0(\varepsilon)=10^{-8}$]. Since the precise relation between the values of $\mathcal{P}_0(\varepsilon)$ and $\varepsilon$ strongly depends on the underlying state, this analysis suggests that the Bracken-Melloy state is not the best suited for the actual experimental implementation of the above scheme for a $^{87}$Rb rubidium atom. Our main objective in this discussion was mainly to illustrate the idea and to provide a proof of principle of how quantum backflow could be experimentally measured by means of TOA measurements, and we keep for further research a dedicated study of a more realistic scenario that would be experimentally feasible.


\begin{table}[h!]
\centering
\vskip 0.2 cm
\begin{tabular}{ |M{1.5cm}|M{3.5cm}|M{3.5cm}|N }
\hline
\vspace{0.1cm}
$\mathcal{P}_0(\varepsilon)$ & $\varepsilon = \frac{m \hbar}{\alpha^2} \varepsilon'$ & $\delta t = \frac{m \hbar}{\alpha^2} \delta t' = \frac{\varepsilon}{10}$ & \\[0.1cm]
\hline\hline
\vspace{0.1cm}
$10^{-2}$ & $1.31 \times 10^{-6} \, \text{s}$ & $1.31 \times 10^{-7} \, \text{s}$ & \\[0.25cm]
$10^{-3}$ & $4.06 \times 10^{-7} \, \text{s}$ & $4.06 \times 10^{-8} \, \text{s}$ & \\[0.25cm]
$10^{-4}$ & $1.24 \times 10^{-7} \, \text{s}$ & $1.24 \times 10^{-8} \, \text{s}$ & \\[0.25cm]
$10^{-5}$ & $3.12 \times 10^{-8} \, \text{s}$ & $3.12 \times 10^{-9} \, \text{s}$ & \\[0.25cm]
$10^{-6}$ & $3.19 \times 10^{-9} \, \text{s}$ & $3.19 \times 10^{-10} \, \text{s}$ & \\[0.25cm]
$10^{-7}$ & $3.19 \times 10^{-10} \, \text{s}$ & $3.19 \times 10^{-11} \, \text{s}$ & \\[0.25cm]
$10^{-8}$ & $3.19 \times 10^{-11} \, \text{s}$ & $3.19 \times 10^{-12} \, \text{s}$ & \\[0.1cm]
\hline
\end{tabular}
\caption{\textbf{Values of the resolution $\delta t$ required for a given value of the probability $\mathcal{P}_0(\varepsilon)$} Values of $\mathcal{P}_0(\varepsilon)$, $\varepsilon$ and $\delta t$ in the particular case of a $^{87}$Rb rubidium atom moving with a typical speed $v=3 \, \text{mm}/\text{s}$.}
\label{table2}
\end{table}


Two final comments can, however, already be made in view of an actual experimental realization of the above scheme to observe quantum backflow by means of TOA measurements:

\begin{enumerate}
    \item First, the state of the system must be relatively easy to prepare, and yet also satisfy the property, crucial for quantum backflow, that it contains only positive momenta at any time during the experiment;

    \item Second, in view of the expression~\eqref{pi_def} of $\pi_x(t)$ in terms of the absolute value $\lvert j_t(x) \rvert$ of the current, the vanishing of $\pi_x(t)$ is in general \textit{not sufficient} in order to ensure that the current changes sign. Indeed, from a mathematical standpoint, nothing a priori prevents the current $j_t(x)$ from being e.g. positive for any $t$ but to nonetheless vanish at some values of $t$ [a current of the form $j_t(x)=1+\cos t$ would be a trivial example that exhibits such a behavior]. Therefore, in an actual experiment, additional conditions about the behavior of the experimentally constructed TOA distribution close to the points where it vanishes are needed in order to unambiguously characterize such points as points where the current changes sign. One such extra condition could for instance be related to the concavity or convexity of $\pi_x(t)$ in the vicinity of such points, or of course, its differentiability if the experimental curve contains enough points. If the experimental state can be theoretically predicted, another possibility could be to apply statistical methods to check quantitatively the agreement between the theoretically predicted TOA distribution and the corresponding experimental distribution.
\end{enumerate}

\subsection{TOA distribution for the free particle with initial superposed Gaussian state}\label{sup_Gaussian_subsec}

\begin{figure*}[!ht]
  \centering
    \includegraphics[scale=0.55]{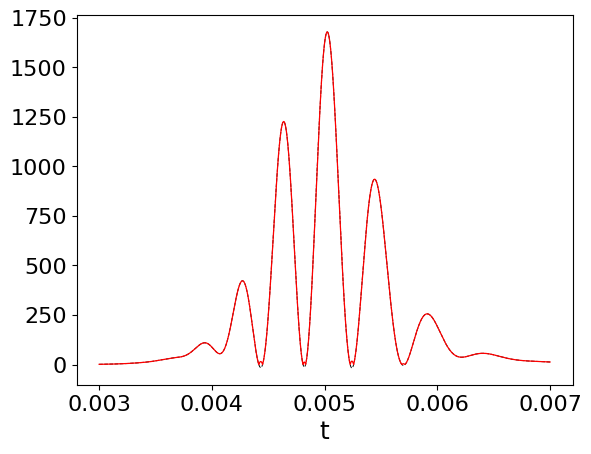}\hspace{0.5cm}\includegraphics[scale=0.55]{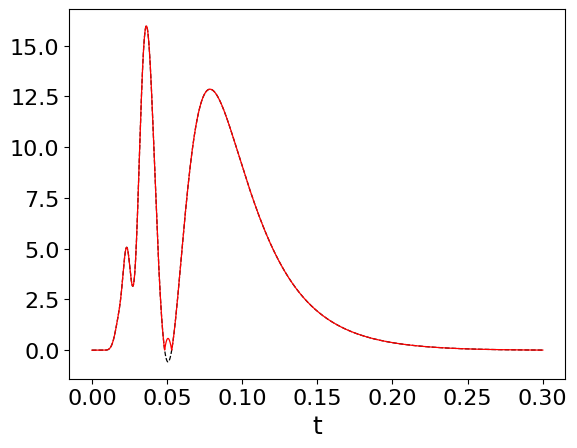}
    \caption{\textbf{Distribution of the TOA for a free particle with an initial
superposition of two Gaussian states} {In this figure, we present the normalized distribution (solid lines) of the time of arrival at position $x = 0$, along with the current density (dashed lines) for a free particle in an initial superposition of two Gaussian states, as defined by equations \eqref{pi_def2} and \eqref{Eq:CurrentGeneral}, along with equations \eqref{Eq:Gauss:Phases}-\eqref{Eq:Gauss:Density}. The parameter values chosen for these illustrations are: $\hbar = 1$, $m = 1$, $\sigma_1 = \sigma_2 = 0.05$, $a_1 = -1$, $a_2 = -0.5$ for both panels. We further take the wave vectors to be $k_1 = 200$ and $k_2 = 100$ for the left panel, versus $k_1 = 5$ and $k_2 = 1.5$ for the right panel.}
} \label{pi(t)}
\end{figure*}

{
In this section, we present an example of derivation of the TOA at a given position distribution for a non-Gaussian system. More precisely, we use our formalism to obtain the TOA distribution for a superposition of two general wave packets and we present a specific application to the case of a non-Gaussian superposition of two Gaussian wave packets.  
}
{
We first consider a wave packet written as a superposition of two wave packets 
\begin{equation}\label{Eq:Superposition:General}
\psi\left( x,t\right) =\sqrt{\mathcal{N}}\left[ \psi_{1}\left( x,t\right) +\psi_{2}\left( x,t\right) \right] ,
\end{equation}%
where $\mathcal{N}$ is the normalization factor. We note that this superposed wave packet is non-Gaussian even if the two underlying wave packets are Gaussian. We then decompose the phase of each wave packet $\psi_{k}=\psi_{k}(x,t)$ into the real and the imaginary phases
\begin{equation}\label{Eq:Wavefunction:General}
    \psi_{j} = e^{\phi_j+i\varphi_j},\ j=1,2
\end{equation}
where $\phi_j=\phi_j(x,t)$ and $\varphi_j=\varphi_j(x,t)$ are two real functions. From the definition of the current (see equation \eqref{Eq:CurrentDef}) we find a general expression for the current of the superposition \eqref{Eq:Superposition:General}
\begin{widetext}
\begin{eqnarray}\label{Eq:CurrentGeneral}
j = \mathcal{N}\left[v_1\rho_1 + v_2\rho_2 + (u_1-u_2)\sqrt{\rho_1\rho_2}\sin(\varphi_1-\varphi_2)+(v_1+v_2)\sqrt{\rho_1\rho_2}\cos(\varphi_1-\varphi_2)\right] ,
\end{eqnarray}
\end{widetext}
where:
\begin{align}
    u_j &= \frac{\hbar}{m}\frac{\partial}{\partial x}\phi_j , \label{Eq:uj} \\
    v_j &= \frac{\hbar}{m}\frac{\partial}{\partial x}\varphi_j \label{Eq:vj}\\
    \rho_j &= |\psi_j|^2 = e^{2\phi_j}. \label{Eq:rhoj}
\end{align}
The first two terms in equation \eqref{Eq:CurrentGeneral} represent the individual current for each wave packet while the next two terms characterize the interference between them. To obtain the TOA distribution, one simply needs to insert the expression of the current \eqref{Eq:CurrentGeneral} into the relation between the TOA and the current \eqref{pi_def2}, up to a constant of normalization.  }
{
We now specialize the analysis to a free particle evolving in one dimension with an initial wave
packet given by the superposition of two Gaussian wave packets with mean
initial positions $a_1$ and $a_2$, initial wave vectors $k_1$ and $k_2$, and initial standard deviations $\sigma_1$ and $\sigma_2$, respectively. In this case, the expression for each wave packet at time $t$ is:
\begin{align}\label{Eq:Wavefct:free}
\psi_{j} &=  \frac{1}{\sqrt{\sqrt{2\pi}\left(\sigma_j+\frac{i\lambda(t)^2}{2\sigma_j}\right)}}e^{-\frac{(x-a_j-2i\sigma_j^2k_j)^2}{2(2\sigma_j^2+i\lambda(t)^2)}-\sigma_j^2k^2+ika_j} \nonumber \\ 
&= C_j\ e^{-\frac{\left( x- x_{j}(t)\right) ^{2}}{4\sigma_{j}(t)^{2}}+i\frac{
\lambda(t)^{2}}{8\sigma_j^{2}\sigma_{j}(t)^{2}}\left( x-x_{j}(t)\right)
^{2} + ik_j x - i\frac{k_j^2\lambda(t)^2}{2}} ,
\end{align}
where $\sigma_j(t)^2 = \sigma_j^2+\frac{\lambda(t)^4}{4\sigma_j^2}$, $\lambda(t) = \sqrt{\frac{\hbar t}{m}}$, the classical position $x_j(t) = a_j + \frac{\hbar}{m} k_j t$, and where the coefficient $C_j = \frac{1}{\sqrt{\sqrt{2\pi}\left(\sigma_j+i\frac{\lambda(t)^2}{2\sigma_j}\right)}} = \frac{1}{(2\pi\sigma_j(t)^2)^{1/4}}e^{-i\chi_j/2}=e^{-\frac{1}{4}\ln{(2\pi\sigma_j(t)^2)}}e^{-\frac{i}{2}\chi_j}$, with $\chi_j = \arccos{\left(\frac{\sigma_j}{\sigma_j(t)}\right)}$. Up to this normalization coefficient $C$, we can rewrite the wave function $\psi_j$ in \eqref{Eq:Wavefct:free} in the general form \eqref{Eq:Wavefunction:General} with:
\begin{align}\label{Eq:Gauss:Phases}
    \phi_j &= -\frac{(x-x_j(t))^2}{4\sigma_j(t)^2}-\frac{1}{4}\ln{(2\pi\sigma_j(t)^2)}\\
    \varphi_j &= \frac{\lambda(t)^2}{8\sigma_j^2\sigma_j(t)^2}(x-x_j(t))^2 + k_j x-\frac{k_j^2\lambda(t)^2}{2} -\frac{\chi_j}{2} 
\end{align}
from which we obtain in view of \eqref{Eq:uj}-\eqref{Eq:vj} the expressions of the velocities
\begin{align}\label{Eq:Gauss:Velocities}
    u_j &= -\frac{\hbar}{m}\frac{x-x_j(t)}{2\sigma_j(t)^2}\\
    v_j &= \frac{\hbar}{m}\frac{\lambda(t)^2}{4\sigma_j^2\sigma_j(t)^2}(x-x_j(t)) + \frac{\hbar}{m}k_j
\end{align}
and in view of \eqref{Eq:rhoj} the expression of the densities
\begin{equation}\label{Eq:Gauss:Density}
    \rho_j =\frac{1}{\sqrt{2\pi\sigma_j(t)^2}}e^{-\frac{(x-x_j(t))^2}{2\sigma_j(t)^2}} .
\end{equation}
}

{
For illustration purposes, we show in Fig. \ref{pi(t)} the distribution of the TOA for this system with a chosen set of parameter values. Note that the distribution depicted in Fig. \ref{pi(t)} has been normalized and therefore represents the probability distribution of the time-of-arrival for particles reaching the detector at a time $t \geq 0$. The two panels in Fig. \ref{pi(t)} use the exact same parameters as Figures 2 and 3 in \cite{Galapon05}, which analyze Kijowski's density distribution (see also Figure 3 in \cite{Hegerfeldt03} and Figure 3 in \cite{Galapon16} for further analysis of the time-distribution of a linear superposition of two free Gaussian wave packets). From a comparison with Figures 2 and 3 in \cite{Galapon05}, we observe that the distributions we obtain are qualitatively similar to Kijowski's distributions despite small quantitative differences, such as the maximum values of the peaks in both figures. Notably, in the right panel of Fig. \ref{pi(t)}, the left peak is higher than the right peak, in contrast to Figure 3 in \cite{Galapon05}. We also observe the presence of a kink at $t \approx 0.03$, which does not exist in the aforementioned figure. Additionally, we note that the current density becomes negative around $t = 0.05$. In coherence with a result noted in the caption of Figure 2 in \cite{Galapon05}, the left panel in Fig. \ref{pi(t)} shows negative values of the current density in small regions around $t \approx 0.0044,\ 0.0048,\ 0.0053$, which indicates the presence of a backflow effect. This occurs when the wave vectors $k_1 = 200$ and $k_2 = 100$ are large compared to the standard deviation $1/(2\sigma_1) = 1/(2\sigma_2) = 10$ of the wave packets.}
{
Interestingly, our result \eqref{Eq:CurrentGeneral}, along with equations \eqref{Eq:Gauss:Phases}-\eqref{Eq:Gauss:Density} for the specific case with $k_1 = -k_2$, is consistent with the findings of Leavens \cite{leavens1998}, who also analyzes the TOA distribution for a free particle with an initial state given by a superposition of two Gaussian wave packets with opposite momenta. This can be attributed to the fact that we employ the same relation between the TOA distribution and the probability current (see equation \eqref{pi_def2} above, and equation (10) in \cite{leavens1998}, up to a normalization factor). As indicated before, the noticeable difference is that we obtain the same expression without having to refer to the Bohmian interpretation of quantum mechanics. }
{
Note that in this section we are able to obtain a general closed-form expression for the current density of a superposition for any wave packets, Gaussian or otherwise (see equation \eqref{Eq:CurrentGeneral}). This greatly simplifies numerical calculations since it eliminates the need to compute the derivative of the wave function with respect to position $x$. A future application of this formula could involve calculating the time distribution for a superposition of Gaussian wave packets in the presence of gravity and/or in a harmonic trap.
}

\section{Conclusions and suggestions for further research}\label{concl_sec}

This paper presents a general framework for the analysis of time measurements for continuous quantum systems and discusses examples of applications. Our
approach, which is based on a straightforward stochastic representation of quantum measurements, is general enough to be applicable in principle to any observable in any continuous system. The extension to TOA distributions for quantum systems with discrete state spaces, which requires a related but different methodology, will be presented in future work.

We also discuss, in particular, the proof of principle of a promising route offered by our TOA distribution~\eqref{pi_def} towards the experimental observation of the still elusive phenomenon of quantum backflow. Since the latter is also known to occur in the presence of a linear potential~\cite{Melloy98_1}, free-fall experiments can for instance be natural candidates.

Given that quantum tunneling can be regarded as a time-of-arrival problem, our approach could for instance be used to analyze the escape time from a region for a particle facing a potential barrier with a peak corresponding to an energy higher than that carried by the particle. The issue of tunneling time has received substantial attention in the literature, but no consensus has been reached in the absence of a straightforward method for handling the question with the basic axioms from quantum mechanics. Another potentially fruitful application would involve analyzing the energy transition time for a system with a time-dependent Hamiltonian, such as the time-dependent harmonic oscillator.

\textbf{Acknowledgments.} Part of his research was conducted while LM was a visiting professor at MIT. He would like to thank Prof. Seth Lloyd for helpful comments, and Prof. Lorenzo Maccone for his invitation to give a seminar at the University of Pavia and for providing fruitful discussions on this occasion. We also acknowledge useful feedback from participants at the PAFT 2024 conference.

\bibliography{Beau_PRA_v2.bib}

\end{document}